\newcommand{\Rmnum}[1]{\expandafter\@slowromancap\romannumeral #1@}
\spnewtheorem{claim}{Claim}{\bfseries}{\rmfamily}
\begin{document}

\title{A Novel Geometric Approach for Outlier Recognition in High Dimension}
\author{Hu Ding and Mingquan Ye}
\institute{
 Department of Computer Science and Engineering \\
Michigan State University\\
  \email{\{huding,yemingqu\}@msu.edu}\\
}
\maketitle

\thispagestyle{empty}

\begin{abstract}
Outlier recognition is a fundamental problem in data analysis and has attracted a great deal of attention in the past decades. However, most existing methods still suffer from several issues such as high time and space complexities or unstable performances for different datasets. In this paper, we provide a novel algorithm for outlier recognition in high dimension based on the elegant geometric technique ``core-set". The algorithm needs only linear time and space complexities and achieves a solid theoretical quality guarantee. Another advantage over the existing methods is that our algorithm can be naturally extended to handle multi-class inliers. Our experimental results show that our algorithm outperforms existing algorithms on both random and benchmark datasets.
\end{abstract}
\section{Introduction}
In this big data era, we are confronted with an extremely large amount of data and it is important to develop efficient algorithmic techniques to handle the arising realistic issues. Due to its recent rapid development, deep learning~\cite{hinton2012improving} becomes a powerful tool for many emerging applications; meanwhile, the quality of training dataset often plays a key role and seriously affects the final learning result. For example, we can collect tremendous data (e.g., texts or images) through the internet, however, the obtained dataset often contains a significant amount of outliers. Since manually removing outliers will be very costly, it is very necessary to develop some efficient algorithms for recognizing outliers automatically in many scenarios.

{\em Outlier recognition} is a typical unsupervised learning problem and its counterpart in supervised learning is usually called {\em anomaly detection}~\cite{tan2006introduction}. In anomaly detection, the given training data are always positive and the task is to generate a model to depict the positive samples. Therefore, any new data can be distinguished to be positive or negative (i.e., anomaly) based on the obtained model. Several existing methods, especially for image data, include autoencoder~\cite{sakurada2014anomaly} and sparse coding~\cite{lu2013abnormal}.

Unlike anomaly detection, the given data for outlier recognition are unlabeled; thus we can only model it as an optimization problem based on some reasonable assumption in practice. For instance, it is very natural to assume that the inliers (i.e., normal data) locate in some dense region while the outliers are scattered in the feature space. Actually, many well known outlier recognition methods are based on this assumption~\cite{breunig2000lof,ester1996density}. However, most of the density-based methods are only for low-dimensional space and quite limited for large-scale high-dimensional data that are very common in computer vision problems (note that several high-dimensional approaches often are of heuristic natures and need strong assumptions~\cite{kriegel2009outlier,kriegel2008angle,aggarwal2001outlier}). Recently,~\cite{liu2014unsupervised} applied the one-class support vector machine (SVM) method~\cite{scholkopf1999support} to high-dimensional outlier recognition. Further,~\cite{xia2015learning} introduced a new unsupervised model of autoencoder inspired by the observation that inliers usually have smaller reconstruction errors than outliers.

\noindent\textbf{Our main contributions.} Although the aforementioned methods could efficiently solve the problem of outlier recognition to a certain extent, they still suffer from several issues such as high time and space complexities or unstable performances for different datasets. In this paper, we present a novel geometric approach for outlier recognition. Roughly speaking, we try to build an approximate minimum enclosing ball (MEB) to cover the inliers but exclude the outliers. This model is seemed to be very simple but involves a couple of computational challenges. For example, the existence of outliers makes the problem to be not only non-convex but also highly combinatorial. Also, the high dimensionality makes the problem more difficult. To tackle these challenges, we develop a randomized algorithmic framework using a popular geometric concept called ``core-set". Comparing with existing results for outlier recognition, we provide a thorough analysis on the complexities and quality guarantee. Moreover, we propose a simple greedy peeling strategy to extend our method to multi-class inliers. Finally, we test our algorithm on both random and benchmark datasets and the experimental results reveal the advantages of our approach over various existing methods.
\subsection{Other Related Work}
\label{sec-related}
Besides the aforementioned existing results, many other methods for outlier recognition/anomaly detection were developed previously and the readers can refer to several excellent surveys~\cite{kriegeloutlier,chandola2009anomaly,gupta2014outlier}.

In computational geometry, a {\em core-set}~\cite{agarwal2005geometric} is a small set of points that approximate the shape of a much larger point set, and thus can be used to significantly reduce the time complexities for many optimization problems (please refer to a recent survey~\cite{DBLP:journals/corr/Phillips16}). In particular, a core-set can be applied to efficiently compute an approximate MEB for a set of points in high-dimensional space~\cite{BHI,DBLP:journals/jea/KumarMY03}. Moreover,~\cite{badoiu2003smaller} showed that it is possible to find a core-set of size $\lceil 2/\epsilon\rceil$ that yields a $(1+\epsilon)$-approximate MEB, with an important advantage that the size is independent of the original size and dimensionality of the dataset. In fact, the algorithm for computing the core-set of MEB is a {\em Frank-Wolfe} style algorithm~\cite{frank1956algorithm}, which has been systematically studied by Clarkson~\cite{C10}.

The problem of MEB with outliers also falls under the umbrella of the topic {\em robust shape fitting}~\cite{har2004shape,agarwal2008robust}, but most of the approaches cannot be applied to high-dimensional data. ~\cite{zarrabistreaming} studied MEB with outliers in high dimension, however, the resulting approximation is a constant $2$ that is not fit enough for the applications proposed in this paper.

Actually, our idea is inspired by a recent work about removing outliers for SVM~\cite{ding2015random}, where they proposed a novel combinatorial approach called {\em Random Gradient Descent (RGD) Tree}. It is known that SVM is equivalent to finding the {\em polytope distance} from the origin to the {\em Minkowski Difference} of the given two labeled point sets. Gilbert algorithm~\cite{gilbert1966iterative,GJ09} is an efficient Frank-Wolfe algorithm for computing polytope distance, but a significant drawback is that the performance is too sensitive to outliers.  To remedy this issue, RGD Tree accommodates the idea of randomization to Gilbert algorithm. Namely, it selects a small random sample in each step by a carefully designed strategy to overcome the adverse effect from outliers.
\subsection{Preliminaries}
\label{sec-pre}
As mentioned before, we model outlier recognition as a problem of MEB with outliers in high dimension. Here we first introduce several definitions that are used throughout the paper.
\begin{definition}[Minimum Enclosing Ball (MEB)]
\label{def-meb}
Given a set $P$ of points in $\mathbb{R}^d$, MEB is the ball covering all the points with the smallest radius. The MEB is denoted by $MEB(P)$.
\end{definition}
\begin{definition}[MEB with Outliers]
\label{def-outlier}
Given a set $P$ of $n$ points in $\mathbb{R}^d$ and a small parameter $\gamma\in (0,1)$, MEB with outliers is to find the smallest ball that covers at least $(1-\gamma)n$ points. Namely, the task is to find a subset of $P$ having at least $(1-\gamma)n$ points such that the resulting MEB is the smallest among all the possible choices; the induced ball is denoted by $MEB(P, \gamma)$.
\end{definition}
From Definition~\ref{def-outlier} we can see that the major challenge is to determine the subset of $P$ which makes the problem a challenging combinatorial optimization. Therefore we relax our goal to its approximation as follows. For the sake of convenience, we always use $P_{\text{opt}}$ to denote the optimal subset of $P$, that is, $P_{\text{opt}}=\arg_{P'}\min\{$ the radius of $MEB(P')\mid P'\subset P, \left|P'\right|\geq (1-\gamma)n\}$, and $r_{\text{opt}}$ to denote the radius of $MEB(P_{\text{opt}})$.
\begin{definition}[Bi-criteria Approximation]
\label{def-app}
Given an instance $(P, \gamma)$ for MEB with outliers and two small parameters $0<\epsilon, \delta<1$, an $(\epsilon, \delta)$-approximation is a ball that covers at least $(1-(1+\delta)\gamma)n$ points and has the radius at most $(1+\epsilon)r_{\text{opt}}$.
\end{definition}
When both $\epsilon$ and $\delta$ are small, the bi-criteria approximation is very close to the optimal solution with only a slight violation on the number of covering points and radius.

The rest of the paper is organized as follows. We introduce our main algorithm and the theoretical analyses in Section~\ref{sec-tech}. The experimental results are shown in Section~\ref{sec-exp}. Finally, we extend our method to handle multi-class inliers in Section~\ref{sec-extension}.
\section{Our Algorithm and Analyses}
\label{sec-tech}
In this section, we present our method in detail. For the sake of completeness, we first briefly introduce the core-set for MEB based on the idea of~\cite{badoiu2003smaller}.

The algorithm is a simple iterative procedure with an elegant analysis: initially, it selects an arbitrary point and places it into a set $S$ that is empty at the beginning; in each of the following $\lceil{2/\epsilon}\rceil$ steps, the algorithm updates the center of $MEB(S)$ and adds the farthest point to $S$; finally, the center of $MEB(S)$ induces a $(1+\epsilon)$-approximation for MEB of the whole input point set. The selected $\lceil{2/\epsilon}\rceil$ points are also called the core-set for MEB. To ensure there is at least certain extent of improvement achieved in each iteration,~\cite{badoiu2003smaller} showed that the following two inequalities would hold if the algorithm always selects the farthest point to the temporary center of $MEB(S)$:
\begin{align}
r_{i+1}&\geq (1+\epsilon)r_{\text{opt}}-L_i, \\
r_{i+1}&\geq \sqrt{r^2_i+L^2_i},
\label{for-meb}
\end{align}
where $r_i$ and $r_{i+1}$ are the radii of the $i$-th and the $(i+1)$-th iterations respectively, $r_{\text{opt}}$ is the optimal radius of the MEB, and $L_i$ is the shifting distance of the center of $MEB(S)$.
\begin{algorithm}
   \caption{$(\epsilon,\delta)$-approximation Algorithm of Outlier Recognition Problem}
   \label{alg1}
\begin{algorithmic}[1]
   \INPUT A point set $P$ with $n$ points in $\mathbb{R}^{d}$, the fraction of outliers $\gamma\in(0,1)$ and four parameters $0<\epsilon,\delta,\mu<1,h\in\mathbb{Z}^{+}$.
   \OUTPUT A tree with each node whose attached point is a candidate for the $(\epsilon,\delta)$-approximation solutions.
\STATE Each node $v$ in the tree is associated with a point (with a slight abuse of notation, we also use $v$ to denote the point). Initially, randomly pick a point from $P$ as root node $r$.
\STATE Starting with root, grow each node as follows:
\begin{enumerate}[(1)]
\item Let $v$ be the current node.
\item If the height of $v$ is $h$, $v$ becomes a leaf node. Otherwise, perform the following steps:
\begin{enumerate}[(a)]
\item Let $\mathcal{P}^r_v$ denote the set of points along the path from root $r$ to node $v$, and $c_v$ denote the center of $MEB(\mathcal{P}^r_v)$. We say that $c_{v}$ is the attached point of $v$.
\item Let $k=(1+\delta)\gamma n$. Compute the point set $P_{v}$ containing the top $k$ points which have the largest distances to $c_v$.
\item Take a random sample $S_{v}$ of size $(1+\frac{1}{\delta})\ln\frac{h}{\mu}$ from $P_{v}$, and let each point $v'\in S_{v}$ be a child node of $v$.
\end{enumerate}
\end{enumerate}
\end{algorithmic}
\end{algorithm}
\vspace{-0.1cm}
\subsection{Algorithm for MEB with Outliers}
\label{sec-alg}
We present our $(\epsilon, \delta)$-approximation algorithm for MEB with outliers in this section. Although the outlier recognition problem belongs to unsupervised learning, we can estimate the fraction of outliers in the given data before executing our algorithm. In practice, we can randomly collect a small set of samples from the given data, and manually identify the outliers and estimate the outlier ratio $\gamma$. Therefore, in this paper we assume that the outlier ratio is known.

To better understand our algorithm, we first illustrate the high-level idea.  If taking a more careful analysis on the previously mentioned core-set construction algorithm~\cite{badoiu2003smaller}, we can find that it is not necessary to select the farthest point to the center of $MEB(S)$ in each step. Instead, as long as the selected point has a distance larger than $(1+\epsilon)r_{\text{opt}}$, the minimal extent of improvement would always be guaranteed~\cite{Din11}. As a consequence, we investigate the following approach.

We denote the ball centered at point $c$ with radius $r>0$ as $Ball(c, r)$. Recall that $P_{\text{opt}}$ is the subset of $P$ yielding the optimal MEB with outliers, and $r_{\text{opt}}$ is the radius of $MEB(P_{\text{opt}})$ (see Section~\ref{sec-pre}). In the $i$-th step, we add an arbitrary point from $P_{opt}\setminus Ball(c_i, (1+\epsilon)r_{\text{opt}})$ to $S$ where $c_i$ is the current center of $S$. Based on the above observation, we know that a $(1+\epsilon)$-approximation is obtained after at most $\lceil{2/\epsilon}\rceil$ steps, that is, $\left|P\cap Ball(c_i, (1+\epsilon)r_{\text{opt}})\right|\geq (1-\gamma)n$ when $i\geq \lceil{2/\epsilon}\rceil$.

However, in order to carry out the above approach we need to solve two key issues: how to determine the value of $r_{\text{opt}}$ and how to select a point belonging to $P_{opt}\setminus Ball(c_i, (1+\epsilon)r_{\text{opt}})$. Actually, we can implicitly avoid the first issue via replacing the radius $(1+\epsilon)r_{\text{opt}}$ by the $k$-th largest distance from the points of $P$ to $c_i$, where $k$ is some appropriate number that will be determined in our following analysis. For the second issue, we have to take a small random sample instead of a single point from $P_{opt}\setminus Ball(c_i, (1+\epsilon)r_{opt})$ and try each of the sampled points; this operation will result in a tree structure that is similar to the RGD Tree introduced by~\cite{ding2015random} for SVM. We present the algorithm in Algorithm~\ref{alg1} and place the detailed parameter settings, proof of correctness, and complexity analyses in Sections~\ref{sec-quality} \&~\ref{sec-complexity}.
\begin{figure}[h]
\centering
\includegraphics[scale=0.35]{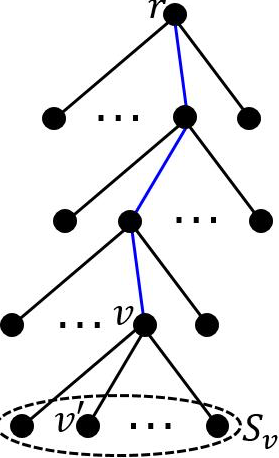}
\caption{The blue links represent the path from root $r$ to node $v$, and $\mathcal{P}^r_v$ contains the four points along the path. The point set $S_v$ corresponds to the child nodes of $v$.}
\label{fig2}
\end{figure}
\vspace{-3ex}

We illustrate Step 2(2)(a-c) of Algorithm~\ref{alg1} in Fig.~\ref{fig2}.
\subsection{Parameter Settings and Quality Guarantee}
\label{sec-quality}
We denote the tree constructed by Algorithm~\ref{alg1} as $\mathbb{H}$. The following theorem shows the success probability of Algorithm~\ref{alg1}.
\begin{theorem}
\label{theorem1}
If we set $h=\lceil{\frac{2}{\epsilon}}\rceil+1$, then with probability at least $(1-\mu)(1-\gamma)$ there exists at least one node of $\mathbb{H}$ yielding  an $(\epsilon,\delta)$-approximation for the problem of MEB with outliers.
\end{theorem}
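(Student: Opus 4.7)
The plan is to trace a single random root-to-leaf path in the tree $\mathbb{H}$ and argue that, with probability at least $(1-\mu)(1-\gamma)$, every point along this path lies in $P_{\text{opt}}$; the core-set growth inequalities recalled just before Algorithm~\ref{alg1} then force the attached center at some node of depth at most $\lceil 2/\epsilon\rceil$ to be an $(\epsilon,\delta)$-approximation.

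First, since the root is drawn uniformly from $P$ and $|P_{\text{opt}}|\geq (1-\gamma)n$, it lies in $P_{\text{opt}}$ with probability at least $1-\gamma$. Conditioning on this, I would extend a ``good'' path $r=v_0,v_1,\dots,v_j$ inductively, where ``good'' means $\mathcal{P}^r_{v_j}\subseteq P_{\text{opt}}$ and, for each $i\geq 1$, the point $v_i$ lies outside $Ball(c_{v_{i-1}},(1+\epsilon)r_{\text{opt}})$. The crucial combinatorial observation is: if $c_{v_i}$ has not already yielded an $(\epsilon,\delta)$-approximation, then strictly more than $(1+\delta)\gamma n$ points of $P$ lie outside $Ball(c_{v_i},(1+\epsilon)r_{\text{opt}})$, so the entire top-$k$ set $P_{v_i}$ (with $k=(1+\delta)\gamma n$) lies outside it. Because $|P\setminus P_{\text{opt}}|\leq \gamma n$, at least $\delta\gamma n$ points of $P_{v_i}$ belong to $P_{\text{opt}}$, so the fraction of good points inside $P_{v_i}$ is at least $\delta/(1+\delta)$.

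A random sample $S_{v_i}$ of size $(1+1/\delta)\ln(h/\mu)$ therefore misses every good point with probability at most $\bigl(\tfrac{1}{1+\delta}\bigr)^{|S_{v_i}|}$, and using the standard inequality $\ln(1+\delta)\geq \delta/(1+\delta)$ this is bounded by $\exp(-\ln(h/\mu))=\mu/h$. A union bound over the (at most) $h$ levels of the path, or equivalently $(1-\mu/h)^h\geq 1-\mu$, then shows that conditioned on a good root the good path survives all the way down with probability at least $1-\mu$. Multiplying by the root event gives the claimed $(1-\mu)(1-\gamma)$.

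To finish, whenever the good path survives, the core-set analysis of~\cite{badoiu2003smaller} applies: as the paper already points out, picking any point outside $Ball(c_i,(1+\epsilon)r_{\text{opt}})$ in place of the truly farthest one still satisfies both recalled inequalities $r_{i+1}\geq(1+\epsilon)r_{\text{opt}}-L_i$ and $r_{i+1}\geq\sqrt{r_i^2+L_i^2}$, so after at most $\lceil 2/\epsilon\rceil$ extensions the attached center $c_{v_j}$ satisfies $P_{\text{opt}}\subseteq Ball(c_{v_j},(1+\epsilon)r_{\text{opt}})$; since $|P_{\text{opt}}|\geq (1-\gamma)n\geq (1-(1+\delta)\gamma)n$, this $c_{v_j}$ is the desired $(\epsilon,\delta)$-approximation, and $h=\lceil 2/\epsilon\rceil+1$ provides exactly enough depth. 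The main obstacle is, in my view, arranging the conditional sampling events cleanly enough for the union bound to deliver exactly the factor $(1-\mu)$, together with calibrating the sample size via the inequality $\ln(1+\delta)\geq \delta/(1+\delta)$ that makes the $(1+1/\delta)\ln(h/\mu)$ sample size just large enough.
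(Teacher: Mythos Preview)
Your proposal is correct and follows essentially the same route as the paper: the same sampling bound $|P_v\cap P_{\text{opt}}|/|P_v|\geq\delta/(1+\delta)$, the same $(1-\mu/h)^h>1-\mu$ chaining, and the same B\u{a}doiu--Clarkson growth recursion terminating in a contradiction with $\mathcal P^r_u\subset P_{\text{opt}}$. The paper organizes it slightly more cleanly by decoupling the two pieces: its Lemmas~2--3 establish the good path $\mathcal P^r_u\subset P_{\text{opt}}$ \emph{unconditionally} (the cardinality bound needs no assumption on $c_v$), and only afterwards does Lemma~5 bring in the dichotomy ``either $c_v$ is already an approximation or the radius recursion holds''; you instead fold the ``outside the ball'' requirement into the good-path definition, which forces conditioning on ``$c_{v_i}$ not yet an approximation'' at every level. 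One small slip to tighten: your last paragraph claims $P_{\text{opt}}\subseteq Ball(c_{v_j},(1+\epsilon)r_{\text{opt}})$ at termination, but what the argument actually yields (and all that is needed) is that some $c_{v_j}$ along the path is an $(\epsilon,\delta)$-approximation, via the contradiction $\tilde r_u>r_{\text{opt}}$.
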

Before proving Theorem~\ref{theorem1}, we need to introduce several important lemmas.
\begin{lemma}~\cite{DX14}
\label{lemma1}
Let $Q$ be a set of elements, and $Q'$ be a subset of $Q$ with
size $\left|Q'\right|=\beta\left|Q\right|$ for some $\beta\in(0,1)$. If one randomly samples $\frac{1}{\beta}\ln\frac{1}{\eta}$ elements from $Q$, then with probability at least $1-\eta$, the sample contains at least one element in $Q'$ for any $0<\eta<1$.
\end{lemma}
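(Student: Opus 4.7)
The plan is to bound the probability of the failure event, namely the event that the random sample lies entirely in $Q \setminus Q'$, and show this probability is at most $\eta$. The complement then gives the desired conclusion that at least one sampled element lies in $Q'$ with probability $\geq 1-\eta$.

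Concretely, I would set $m := \frac{1}{\beta}\ln\frac{1}{\eta}$ and treat the $m$ draws as independent uniform samples from $Q$ (the standard ``random sample'' convention in this line of work; if the draws are taken without replacement then the miss probability is only smaller, so the same bound applies). Since $|Q'| = \beta |Q|$, a single draw fails to hit $Q'$ with probability exactly $1-\beta$. By independence, the probability that all $m$ draws miss $Q'$ equals $(1-\beta)^m$.

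The core analytic step is the classical inequality $1-\beta \leq e^{-\beta}$, which gives $(1-\beta)^m \leq e^{-\beta m}$. Plugging in $m = \frac{1}{\beta}\ln\frac{1}{\eta}$ yields $e^{-\beta m} = e^{-\ln(1/\eta)} = \eta$. Hence the miss probability is at most $\eta$, and consequently the sample contains at least one element of $Q'$ with probability at least $1-\eta$, as claimed.

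There is no genuine obstacle here; the argument is a textbook coupon/sampling bound. The only minor bookkeeping points are (i) clarifying the sampling model (with vs.\ without replacement), since in either case the with-replacement bound upper-bounds the without-replacement miss probability, and (ii) implicitly rounding $\frac{1}{\beta}\ln\frac{1}{\eta}$ up to the nearest integer, which only strengthens the bound. These are routine and do not require a separate argument.
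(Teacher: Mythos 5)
Your proof is correct and is exactly the standard argument: bound the failure probability by $(1-\beta)^{m}\le e^{-\beta m}=\eta$ using $1-\beta\le e^{-\beta}$, with the with/without-replacement and integer-rounding caveats handled appropriately. The paper itself gives no proof for this lemma (it is cited from an external reference), so there is nothing to compare against; your argument is the one the cited source uses and is complete as written.
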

\begin{lemma}
For each node $v$, the set $S_{v}$ in Algorithm~\ref{alg1} contains at least one point from $P_{\text{opt}}$ with probability $1-\frac{\mu}{h}$.
\label{lemma2}
\end{lemma}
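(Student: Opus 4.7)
The plan is to reduce the statement to a direct application of Lemma~\ref{lemma1}, with the main work being to identify the right ``target'' subset inside $P_v$ and to verify the sampling size matches the one prescribed by the algorithm.

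First I would fix a node $v$ and set $Q = P_v$, the set of the top $k = (1+\delta)\gamma n$ points of $P$ farthest from $c_v$. The natural candidate for the good subset is $Q' = P_v \cap P_{\text{opt}}$; any element sampled from $Q'$ is in particular a point of $P_{\text{opt}}$, which is what the lemma demands. The key quantitative step is to lower bound $|Q'|$. Since $|P \setminus P_{\text{opt}}| \leq \gamma n$ by Definition~\ref{def-outlier}, at most $\gamma n$ elements of $P_v$ can fall outside $P_{\text{opt}}$, so
\begin{equation*}
|Q'| = |P_v| - |P_v \setminus P_{\text{opt}}| \;\geq\; (1+\delta)\gamma n - \gamma n \;=\; \delta \gamma n.
\end{equation*}
Consequently $\beta := |Q'|/|Q| \geq \delta\gamma n / ((1+\delta)\gamma n) = \delta/(1+\delta)$.

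Next I would invoke Lemma~\ref{lemma1} with the above $Q$, $Q'$, $\beta$, and with failure probability $\eta = \mu/h$. The lemma says that a uniform random sample from $Q$ of size $\tfrac{1}{\beta}\ln\tfrac{1}{\eta}$ hits $Q'$ with probability at least $1-\eta$. Plugging in the bound $\beta \geq \delta/(1+\delta)$, it suffices to take a sample of size
\begin{equation*}
\frac{1+\delta}{\delta}\ln\frac{h}{\mu} \;=\; \Bigl(1+\tfrac{1}{\delta}\Bigr)\ln\frac{h}{\mu},
\end{equation*}
which is exactly the cardinality of $S_v$ in Algorithm~\ref{alg1}. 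Therefore $S_v$ meets $Q' \subseteq P_{\text{opt}}$ with probability at least $1 - \mu/h$, completing the proof.

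I do not anticipate a real obstacle: the statement is essentially a bookkeeping application of Lemma~\ref{lemma1}. The only mildly delicate point is the observation that $P_v$, although defined by a data-dependent rule (the top-$k$ farthest points from $c_v$), automatically contains a constant fraction of inliers, because $P_v$ has size $(1+\delta)\gamma n$ while the total number of outliers is only $\gamma n$; this ``pigeonhole'' style bound is what makes the prescribed sample size in the algorithm line up precisely with the $\tfrac{1}{\beta}\ln\tfrac{1}{\eta}$ requirement of Lemma~\ref{lemma1}.
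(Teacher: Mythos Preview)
Your proposal is correct and follows essentially the same route as the paper: bound $|P_v\cap P_{\text{opt}}|/|P_v|\ge \delta/(1+\delta)$ via the pigeonhole observation that $|P_v|=(1+\delta)\gamma n$ while $|P\setminus P_{\text{opt}}|\le \gamma n$, then apply Lemma~\ref{lemma1} with $\beta=\delta/(1+\delta)$ and $\eta=\mu/h$ to match the sample size $|S_v|=(1+\tfrac{1}{\delta})\ln\tfrac{h}{\mu}$.
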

\begin{proof}
Since $\left|P_{v}\right|=(1+\delta)n\gamma$ and $\left|P\backslash P_{\text{opt}}\right|=n\gamma$, we have
\begin{equation}
\begin{aligned}
\frac{\left|{P_{v}\cap P_{\text{opt}}}\right|}{\left|P_{v}\right|}&=1-\frac{\left|{P_{v}\backslash P_{\text{opt}}}\right|}{\left|P_{v}\right|}\\
&\ge1-\frac{\left|{P\backslash P_{\text{opt}}}\right|}{\left|P_{v}\right|}=\frac{\delta}{1+\delta}.
\end{aligned}
\end{equation}
Note that the size of $S_{v}$ is $(1+\frac{1}{\delta})\ln\frac{h}{\mu}$. If we apply Lemma~\ref{lemma1} via setting $\beta=\frac{\delta}{1+\delta}$ and $\eta=\frac{\mu}{h}$, it is easy to know that $S_{v}$ contains at least one point from $P_{\text{opt}}$ with probability $1-\frac{\mu}{h}$.
\end{proof}
\begin{lemma}
With probability $(1-\gamma)(1-\mu)$, there exists a leaf node $u\in\mathbb{H}$ such that the corresponding set $\mathcal{P}^r_u\subset P_{\text{opt}}$.
\label{lemma3}
\end{lemma}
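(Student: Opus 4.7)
\textbf{Proof plan for Lemma~\ref{lemma3}.} The plan is to exhibit, with the claimed probability, a single good root-to-leaf path in $\mathbb{H}$ whose associated point set lies entirely inside $P_{\text{opt}}$. I will build this path top-down, charging one failure probability to the root selection and one failure probability to each of the $h$ level expansions along the path, and then combine everything by independence and a union bound.

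First I would handle the root. Since the root $r$ is drawn uniformly at random from $P$ and $|P_{\text{opt}}| = (1-\gamma)n$, the event $E_0 := \{r \in P_{\text{opt}}\}$ occurs with probability exactly $1-\gamma$. Condition on $E_0$; then trivially $\mathcal{P}^r_r = \{r\} \subset P_{\text{opt}}$.

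Next I would propagate this property down the tree by induction on depth. Suppose we have identified a node $v$ at depth $i < h$ along our candidate path with $\mathcal{P}^r_v \subset P_{\text{opt}}$. Applying Lemma~\ref{lemma2} to $v$, the child set $S_v$ contains at least one point from $P_{\text{opt}}$ with probability at least $1-\mu/h$; call this event $E_{i+1}$, and in case of success, choose any such child as the next node on the path. The selected child $v'$ satisfies $\mathcal{P}^r_{v'} = \mathcal{P}^r_v \cup \{v'\} \subset P_{\text{opt}}$, completing the inductive step. Repeating this $h$ times produces a leaf $u$ at height $h$ with $\mathcal{P}^r_u \subset P_{\text{opt}}$.

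It remains to combine the failure probabilities. Crucially, the random samples drawn at the root, and at each of the nodes $v$ on the path, use independent randomness, so the events $E_0, E_1, \ldots, E_h$ are mutually independent. By a union bound over the $h$ expansion steps, $\Pr[\bigcap_{i=1}^{h} E_i] \ge 1 - h \cdot (\mu/h) = 1-\mu$. Multiplying by $\Pr[E_0] = 1-\gamma$ gives the overall success probability $(1-\gamma)(1-\mu)$, as required. The only subtle point, and the main thing to state carefully, is this independence/union-bound combination: once we argue that our construction only needs \emph{some} $P_{\text{opt}}$-point in each $S_v$ along the chosen path (rather than e.g. all children being good), the bound follows immediately from Lemma~\ref{lemma2} without any further probabilistic work.
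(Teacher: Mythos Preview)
Your proof is correct and follows the same route as the paper: show the root lands in $P_{\text{opt}}$ with probability $1-\gamma$, then propagate down the tree using Lemma~\ref{lemma2} at each of the $h$ levels. The only cosmetic difference is that the paper multiplies the per-level success probabilities and bounds $(1-\mu/h)^{h}>1-\mu$, whereas you reach the same $1-\mu$ via a union bound over the $h$ failure events; note that the union bound does not actually require the independence you invoke, so that remark can be dropped.
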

\begin{proof}
Lemma~\ref{lemma2} indicates that each node $v$ has a child node corresponding to a point from $P_{\text{opt}}$ with probability $1-\frac{\mu}{h}$. In addition, the probability of root $r$ belonging to $P_{\text{opt}}$ is $1-\gamma$ (recall that $\gamma$ is the fraction of outliers). Note that the height of $\mathbb{H}$ is $h$, then with probability at least
\begin{equation}
(1-\gamma)\left({1-\frac{\mu}{h}}\right)^{h}>(1-\gamma)(1-\mu),
\end{equation}
there exists one leaf node $u\in\mathbb{H}$ satisfying $\mathcal{P}^r_u\subset P_{\text{opt}}$.
\end{proof}

In the remaining analyses, we always assume that such a root-to-leaf path $\mathcal{P}^r_u$ described in Lemma~\ref{lemma3} exists and only focus on the nodes along this path. We denote $\hat{R}=(1+\epsilon)r_{\text{opt}}$ where $r_{\text{opt}}$ is the optimal radius of the MEB with outliers. Let $Ball(c_v, r_v)$ be the MEB covering $P\setminus P_v$ centered at $c_v$, and the radii of $MEB(\mathcal{P}^r_v)$ and $MEB(\mathcal{P}^r_{v'})$ be $\tilde{r}_{v}$ and $\tilde{r}_{v'}$ respectively. Readers can refer to Fig.~\ref{fig3}.
The following lemma is a key observation for MEB.

\begin{lemma}~\cite{badoiu2003smaller}
Given a set $P$ of points in $\mathbb{R}^d$, let $r_P$ and $c_P$ be the radius and center of $MEB(P)$ respectively. Then for any point $p\in\mathbb{R}^d$ with a distance $K\geq 0$ to $c_P$, there exists a point $q\in P$ such that $\left\|p-q\right\|\geq \sqrt{r^2_P+K^2}$.
\label{lemma4}
\end{lemma}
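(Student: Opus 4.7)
The plan is to use the standard characterization that the center $c_P$ of $MEB(P)$ lies in the convex hull of the extremal points of $P$, i.e., those points $q\in P$ satisfying $\|q-c_P\|=r_P$. Assuming $K>0$ (the case $K=0$ is trivial since any extremal point $q$ gives $\|p-q\|=r_P=\sqrt{r_P^2+K^2}$), consider the unit vector $u=(p-c_P)/K$ and the closed halfspace $H^-=\{x:\langle x-c_P,u\rangle\leq 0\}$, which is the halfspace not containing $p$.

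The key geometric step is to exhibit an extremal point $q\in P$ lying in $H^-$. Suppose for contradiction that every extremal point $q'$ of $P$ satisfies $\langle q'-c_P,u\rangle>0$. Then every point in the convex hull of the extremal points has strictly positive inner product with $u$, contradicting the fact that $c_P$ itself lies in this convex hull (the inner product of $c_P-c_P=0$ with $u$ is $0$). Hence some extremal point $q\in P$ satisfies $\langle q-c_P,p-c_P\rangle\leq 0$.

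With this $q$ in hand, the bound follows from a direct expansion:
\begin{equation*}
\|p-q\|^2=\|p-c_P\|^2+\|c_P-q\|^2-2\langle p-c_P,q-c_P\rangle\geq K^2+r_P^2,
\end{equation*}
using $\|p-c_P\|=K$, $\|c_P-q\|=r_P$, and $\langle p-c_P,q-c_P\rangle\leq 0$. Taking square roots yields the claimed inequality $\|p-q\|\geq\sqrt{r_P^2+K^2}$.

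The main obstacle is justifying that $c_P$ lies in the convex hull of the extremal points of $P$. I would prove this by contradiction: if $c_P$ were separated from that convex hull by some hyperplane with outward normal $w$, then shifting the center slightly in the direction $w$ by a small amount $t>0$ would strictly decrease the distance to every extremal point (to first order), and since the non-extremal points lie strictly inside $MEB(P)$ (with positive slack), the shifted ball of the same radius would still cover them for small enough $t$. A small further radius reduction would then produce a strictly smaller enclosing ball, contradicting the minimality of $r_P$. Aside from this standard convex-analytic fact, the remaining argument is the elementary inner product calculation above.
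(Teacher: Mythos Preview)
Your proof is correct. Note, however, that the paper does not actually give its own proof of this lemma: it is stated with a citation to B\u{a}doiu--Clarkson and used as a black box. Your argument is essentially the standard one behind that cited result---the key fact being that the center $c_P$ of $MEB(P)$ lies in the convex hull of the contact set $\{q\in P:\|q-c_P\|=r_P\}$, which forces some contact point $q$ to lie in the closed halfspace through $c_P$ opposite to $p$, after which the law of cosines gives the bound. Your justification of the convex-hull fact (perturb the center along a separating direction to shrink the ball) is also the usual one, so there is nothing to compare: you have reconstructed the standard proof that the paper omits.
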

The following lemma is a key for proving the quality guarantee of Algorithm~\ref{alg1}. As mentioned in Section~\ref{sec-alg}, the main idea follows the previous works~\cite{badoiu2003smaller,Din11}. For the sake of completeness, we present the detailed proof here.
\begin{lemma}
For each node $v\in\mathcal{P}^r_u$, at least one of the following two events happens: (1) $c_v$ is an $(\epsilon,\delta)$-approximation; (2) its child $v'$ on the path $\mathcal{P}^r_u$ satisfies
\begin{equation}
\tilde{r}_{v'}\ge\frac{\hat{R}}{2}+\frac{\tilde{r}_{v}^{2}}{2\hat{R}}.\label{for-lem5}
\end{equation}
\label{lemma5}
\end{lemma}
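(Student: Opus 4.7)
My plan is to dichotomize on whether $r_v\le\hat{R}$ or $r_v>\hat{R}$, where $\hat{R}=(1+\epsilon)r_{\text{opt}}$. Observe first that, by construction of $P_v$, the complement $P\setminus P_v$ contains exactly $(1-(1+\delta)\gamma)n$ points and all of them sit inside $Ball(c_v,r_v)$.

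In the easy regime $r_v\le\hat{R}$, the ball $Ball(c_v,\hat{R})$ already encloses $(1-(1+\delta)\gamma)n$ points and has radius $(1+\epsilon)r_{\text{opt}}$, so Definition~\ref{def-app} says $c_v$ is (the centre of) an $(\epsilon,\delta)$-approximation and the first alternative of the lemma is witnessed.

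In the opposite regime $r_v>\hat{R}$, the child $v'$ is drawn from $P_v$ and therefore $\|v'-c_v\|\ge r_v>\hat{R}$. I would now reproduce the two inequalities that underlie the standard core-set growth argument~(\ref{for-meb}), but for the single enlargement $\mathcal{P}^r_v\to\mathcal{P}^r_{v'}$. Write $L=\|c_{v'}-c_v\|$ for the centre shift. A triangle-inequality application to the triple $v',c_{v'},c_v$ yields $\hat{R}<\|v'-c_v\|\le\tilde{r}_{v'}+L$, i.e., $L\ge\hat{R}-\tilde{r}_{v'}$. Invoking Lemma~\ref{lemma4} on $MEB(\mathcal{P}^r_v)$ with external point $c_{v'}$ at distance $L$ from $c_v$ supplies a $q\in\mathcal{P}^r_v\subset\mathcal{P}^r_{v'}$ with $\|c_{v'}-q\|\ge\sqrt{\tilde{r}_v^{2}+L^{2}}$; since $q$ is forced to sit inside $MEB(\mathcal{P}^r_{v'})$, we obtain $\tilde{r}_{v'}^{2}\ge\tilde{r}_v^{2}+L^{2}$. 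Eliminating $L$ between the two estimates produces the quadratic $\tilde{r}_{v'}^{2}\ge\tilde{r}_v^{2}+(\hat{R}-\tilde{r}_{v'})^{2}$, which rearranges to exactly~(\ref{for-lem5}).

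The main subtlety I expect is that the triangle bound $L\ge\hat{R}-\tilde{r}_{v'}$ is only informative when $\tilde{r}_{v'}<\hat{R}$; otherwise it degenerates to the vacuous $L\ge 0$. Fortunately, because $\mathcal{P}^r_v\subset P_{\text{opt}}$ (by the hypothesis governing the path $\mathcal{P}^r_u$) we always have $\tilde{r}_v\le r_{\text{opt}}\le\hat{R}$, so $\hat{R}/2+\tilde{r}_v^{2}/(2\hat{R})\le\hat{R}$; whenever $\tilde{r}_{v'}\ge\hat{R}$ the target inequality~(\ref{for-lem5}) is therefore automatic, and the quadratic argument only needs to be executed on the complementary range. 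Everything else reduces to routine algebra.
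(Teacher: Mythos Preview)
Your proposal is correct and follows essentially the same approach as the paper: the identical dichotomy on $r_v\le\hat{R}$ versus $r_v>\hat{R}$, the same triangle-inequality bound relating $\tilde{r}_{v'}$, the centre shift, and $\hat{R}$, and the same invocation of Lemma~\ref{lemma4} for the Pythagorean-type bound. The only cosmetic difference is in how the two inequalities are combined: the paper writes $\tilde{r}_{v'}\ge\max\{\hat{R}-K_v,\sqrt{\tilde{r}_v^{2}+K_v^{2}}\}$ and minimises the right-hand side over $K_v$, whereas you eliminate $L$ by direct substitution into the quadratic; both yield the same bound. Your explicit treatment of the case $\tilde{r}_{v'}\ge\hat{R}$ (using $\mathcal{P}^r_v\subset P_{\text{opt}}$ to get $\tilde{r}_v\le\hat{R}$) is a detail the paper leaves implicit.
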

\vspace{-5ex}
\begin{proof}
If $r_{v}\le\hat{R}$, then we are done; that is, $Ball(c_v, r_v)$ covers $(1-(1+\delta)\gamma)n$ points and $r_v\leq (1+\epsilon)r_{\text{opt}}$. Otherwise, $r_{v}>\hat{R}$ and we consider the second case.

By triangle inequality and the fact that $v'$ (i.e., the point associating the node ``$v'$") lies outside $Ball(c_v, r_v)$, we have
\begin{equation}
\left\|{c_v-c_{v'}}\right\|+\left\|{c_{v'}-v'}\right\|\ge\left\|{c_v-v'}\right\|> r_{v}>\hat{R}.
\end{equation}
Let $\left\|c_{v}-c_{v'}\right\|=K_{v}$. Combining the fact that $\left\|c_{v'}-v'\right\|\le\tilde{r}_{v'}$, we have
\begin{equation}
\tilde{r}_{v'}>\hat{R}-K_{v}.
\label{eq1}
\end{equation}
\vspace{-2ex}
\begin{figure}[h]
\centering
\includegraphics[scale=0.35]{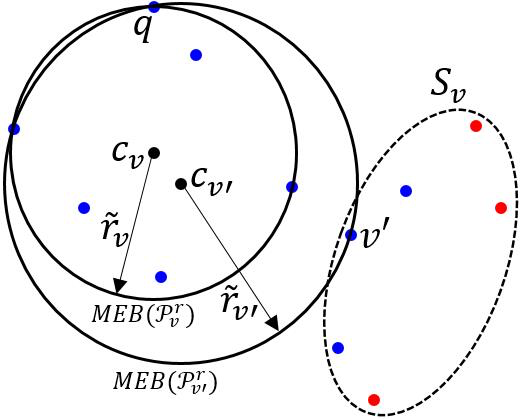}
\caption{The illustration of $MEB(\mathcal{P}^{r}_{v})$ and $MEB(\mathcal{P}^{r}_{v'})$; the blue and red points represent the inliers and outliers, respectively.}
\label{fig3}
\end{figure}
\vspace{-2ex}

By Lemma~\ref{lemma4}, we know that there exists one point $q$ (see Fig.~\ref{fig3}) in $MEB(\mathcal{P}^r_v)$ satisfying $\left\|q-c_{v'}\right\|\ge\sqrt{\tilde{r}_{v}^{2}+K_{v}^{2}}$. Since $q$ is also inside $MEB(\mathcal{P}^r_{v'})$, $\left\|q-c_{v'}\right\|\le\tilde{r}_{v'}$. Then, we have
\begin{equation}
\tilde{r}_{v'}\ge\sqrt{\tilde{r}_{v}^{2}+K_{v}^{2}}.
\label{eq2}
\end{equation}
Combining (\ref{eq1}) and (\ref{eq2}), we obtain
\begin{equation}
\tilde{r}_{v'}\ge\max\left\{\hat{R}-K_{v},\sqrt{\tilde{r}_{v}^{2}+K_{v}^{2}}\right\}. \label{for-max}
\end{equation}
Because $\hat{R}-K_{v}$ and $\sqrt{\tilde{r}_{v}^{2}+K_{v}^{2}}$ are decreasing and increasing on $K_v$ respectively, we let $\hat{R}-K_{v}=\sqrt{\tilde{r}_{v}^{2}+K_{v}^{2}}$ to achieve the lower bound (i.e., $K_{v}=\frac{\hat{R}^{2}-\tilde{r}_{v}^{2}}{2\hat{R}}$). Substituting the value of $K_{v}$ to (\ref{for-max}), we have $\tilde{r}_{v'}\ge\frac{\hat{R}}{2}+\frac{\tilde{r}_{v}^{2}}{2\hat{R}}$. As a consequence, the second event happens and the proof is completed.
\end{proof}

Now we prove Theorem~\ref{theorem1} by the idea from~\cite{badoiu2003smaller}. Suppose no node in $\mathcal{P}^{r}_{u}$ makes the first event of Lemma~\ref{lemma5} occur. As a consequence, we obtain a series of inequalities for each pair of radii $\tilde{r}_{v'}$ and $\tilde{r}_{v}$ (see (\ref{for-lem5})). For ease of analysis, we denote $\tilde{r}_v=\lambda_{i}\hat{R}$ if the height of $v$ is $i$ in $\mathbb{H}$. By Inequality (\ref{for-lem5}), we have
\begin{equation}
\label{eq3}
\lambda_{i+1}\ge\frac{1+\lambda_{i}^{2}}{2}.
\end{equation}
Combining the initial case $\lambda_{1}=0$ and (\ref{eq3}), we obtain
\begin{equation}
\label{eq4}
\lambda_{i}\ge 1-\frac{2}{i+1}
\end{equation}
by induction~\cite{badoiu2003smaller}. Note that the equality in (\ref{eq4}) holds only when $i=1$, therefore,
\begin{equation}
\lambda_{h}>1-\frac{2}{h+1}=1-\frac{2}{\lceil{\frac{2}{\epsilon}}\rceil+2}\ge 1-\frac{2}{\frac{2}{\epsilon}+2}=\frac{1}{1+\epsilon}.
\end{equation}
Then, $\tilde{r}_{u}=\lambda_{h}\hat{R}>r_{\text{opt}}$ (recall that $u$ is the leaf node on the path $\mathcal{P}^{r}_{u}$), which is a contradiction to our assumption $\mathcal{P}^{r}_{u}\subset P_{\text{opt}}$. The success probability directly comes from Lemma~\ref{lemma3}. Overall, we obtain Theorem~\ref{theorem1}.
\subsection{Complexity Analyses}
\label{sec-complexity}
We analyze the time and space complexities of Algorithm~\ref{alg1} in this section.

\textbf{Time Complexity}. For each node $v$, we need to compute the corresponding approximate $MEB(\mathcal{P}^r_v)$. To avoid computing the exact MEB costly, we apply the approximation algorithm proposed by~\cite{badoiu2003smaller}. See Algorithm~\ref{alg2} for details.
\begin{algorithm}[H]
   \caption{Approximation Algorithm of MEB}
   \label{alg2}
\begin{algorithmic}[1]
   \INPUT A point set $Q$ in $\mathbb{R}^{d}$, and $N\in Z^+$.
   \STATE Start with an arbitrary point $c_{1}\in Q$, $t\leftarrow 1$.
   \WHILE{$t<N$}
       \STATE Find the point $q\in Q$ farthest away from $c_{t}$.
       \STATE $c_{t+1}\leftarrow c_{t}+\frac{1}{t+1}(q-c_{t})$.
       \STATE $t\leftarrow t+1$.
   \ENDWHILE
   \STATE \textbf{return} $c_{t}$.
\end{algorithmic}
\end{algorithm}
\vspace{-0.4cm}
For Algorithm~\ref{alg2}, we have the following theorem.
\begin{theorem}
\label{the-alg2}
~\cite{badoiu2003smaller} Let the center and radius of $MEB(Q)$ be $c_Q$ and $r_Q$ respectively, then $\forall t$, $\left\|{c_{Q}-c_{t}}\right\|\le \frac{r_{Q}}{\sqrt{t}}$.
\end{theorem}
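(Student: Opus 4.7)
The plan is to prove the equivalent squared form $\|c_Q - c_t\|^2 \le r_Q^2/t$ by induction on $t$. The base case $t=1$ is immediate since $c_1 \in Q$ gives $\|c_Q - c_1\| \le r_Q$. For the inductive step, I would rewrite the update as $c_{t+1} - c_Q = \tfrac{t}{t+1}(c_t - c_Q) + \tfrac{1}{t+1}(q - c_Q)$ and expand via the law of cosines:
\[
\|c_{t+1}-c_Q\|^2 = \tfrac{t^2}{(t+1)^2}\|c_t-c_Q\|^2 + \tfrac{2t}{(t+1)^2}\langle c_t-c_Q,\, q-c_Q\rangle + \tfrac{1}{(t+1)^2}\|q-c_Q\|^2.
\]

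The crucial ingredient I would establish first is a Pythagorean-type inequality for MEB: for every $c\in\mathbb{R}^d$,
\[
\max_{p\in Q}\|p-c\|^2 \;\ge\; r_Q^2 + \|c_Q-c\|^2.
\]
I would derive this from the classical fact that $c_Q$ lies in the convex hull of the support points $\{b_j\}\subset Q$ on the MEB boundary, say $c_Q=\sum_j\alpha_j b_j$. A direct expansion gives $\sum_j \alpha_j\|b_j-c\|^2 = r_Q^2+\|c_Q-c\|^2$, and the max is at least any such convex average. Applied at $c=c_t$, combined with the fact that $q$ is the farthest point from $c_t$, this yields $\|q-c_t\|^2 \ge r_Q^2 + \|c_t-c_Q\|^2$. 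Expanding $\|q-c_t\|^2 = \|(q-c_Q)-(c_t-c_Q)\|^2$ and rearranging converts this into the sharp inner-product bound
\[
\langle c_t-c_Q,\, q-c_Q\rangle \;\le\; \tfrac{1}{2}\bigl(\|q-c_Q\|^2 - r_Q^2\bigr)\;\le\;0.
\]

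Plugging this into the cosine expansion, together with $\|q-c_Q\|^2\le r_Q^2$ (since $q\in Q$) and the induction hypothesis, the three terms combine as $\tfrac{t^2}{(t+1)^2}\cdot\tfrac{r_Q^2}{t} + \tfrac{r_Q^2}{(t+1)^2} = \tfrac{r_Q^2}{t+1}$, closing the induction and giving the theorem after taking square roots.

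The main obstacle is that the weaker and more obvious inequality $\|q-c_t\|\ge r_Q$ (which follows merely from $B(c_t,\|q-c_t\|)\supseteq Q$ and the minimality of the MEB radius) is not strong enough to close the induction: it would leave an extra $O(r_Q^2/(t+1)^2)$ slack per step that accumulates into a stray $\log t$ factor. The additional $\|c_Q-c\|^2$ term in the Pythagorean-type inequality above is precisely what cancels this slack and allows the $1/t$ decay to propagate exactly.
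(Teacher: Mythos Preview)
Your argument is correct. Note, however, that the paper does not give its own proof of this theorem: it is quoted directly from B\u{a}doiu--Clarkson~\cite{badoiu2003smaller} and only \emph{used} (to bound the cost of computing $MEB(\mathcal{P}^r_v)$ in the complexity analysis). So there is nothing in the paper to compare your proof against beyond the bare statement.

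That said, your proof is exactly the standard B\u{a}doiu--Clarkson argument, and it meshes cleanly with the paper's presentation: the ``Pythagorean-type inequality'' you isolate, $\max_{p\in Q}\|p-c\|^2 \ge r_Q^2 + \|c_Q-c\|^2$, is precisely the paper's Lemma~\ref{lemma4} (also cited from~\cite{badoiu2003smaller}). Your derivation of it via $c_Q\in\mathrm{conv}\{b_j\}$ and the identity $\sum_j\alpha_j\|b_j-c\|^2 = r_Q^2+\|c_Q-c\|^2$ is the standard one, and your subsequent deduction that $\langle c_t-c_Q,\,q-c_Q\rangle\le 0$ is correct and is exactly the step needed to make the induction close with no logarithmic loss. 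The final arithmetic $\tfrac{t}{(t+1)^2}r_Q^2+\tfrac{1}{(t+1)^2}r_Q^2=\tfrac{r_Q^2}{t+1}$ is right. Your closing remark about why the weaker bound $\|q-c_t\|\ge r_Q$ would only give a harmonic sum (and hence a $\sqrt{\log t}$ factor) is also accurate and a nice diagnostic of where the strength of Lemma~\ref{lemma4} is actually spent.
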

From Theorem~\ref{the-alg2}, we know that a $(1+\varepsilon)$-approximation for MEB can be obtained when $N=1/\varepsilon^2$ with the time complexity $O\left(\frac{\left|{Q}\right|d}{\varepsilon^2}\right)$. Suppose the height of node $v$ is $i$, then the complexity for computing the corresponding approximate $MEB(\mathcal{P}^r_v)$ is $O\left(\frac{id}{\varepsilon^{2}}\right)$. Further, in order to obtain the point set $P_{v}$, we need to find the pivot point that has the $(n-k)$-th smallest  distance to $c_v$. Here we apply the PICK algorithm~\cite{blum1973time} which can find the $l$-th smallest from a set of $n$ ($l\le n$) numbers in linear time. Consequently, the complexity for each node $v$ at the $i$-th layer is $O\left({\left({n+\frac{i}{\varepsilon^{2}}}\right)d}\right)$. Recall that there are  $\left|{S_{v}}\right|^{i-1}$ nodes at the $i$-th layer of $\mathbb{H}$. In total, the time complexity of our algorithm is
\begin{equation}
T=\sum_{i=1}^{h}\left({\left({1+\frac{1}{\delta}}\right)\ln\frac{h}{\mu}}\right)^{i-1}\left({n+\frac{i}{\varepsilon^2}}\right)d.
\end{equation}
If we assume $1/\varepsilon$ is a constant, the complexity $T=O(Cnd)$ is linear in $n$ and $d$, where the hidden constant $C=\left({\left({1+\frac{1}{\delta}}\right)\ln\frac{h}{\mu}}\right)^{h-1}$. In our experiment, we can carefully choose the parameters $\delta,\epsilon,\mu$ so as to keep the value of $C$ not too large.

\textbf{Space Complexity}. In our implementation, we use a queue $\mathcal{Q}$ to store the nodes in the tree. When the head of $\mathcal{Q}$ is popped, its $\left|S_{v}\right|$ child nodes are pushed into $\mathcal{Q}$. In other words, we simulate {\em breadth first search} on the tree $\mathbb{H}$. Therefore, $\mathcal{Q}$ always keeps its size at most $C=\left({\left({1+\frac{1}{\delta}}\right)\ln\frac{h}{\mu}}\right)^{h-1}$. Note that each node $v$ needs to store $\mathcal{P}^r_v$ to compute its corresponding MEB, but actually we only need to record the pointers to link the points in $\mathcal{P}^{r}_{v}$. Therefore, the space complexity of $\mathcal{Q}$ is $O(Ch)$. Together with the space complexity of the input data, the total space complexity of our algorithm is $O(Ch+nd)$.
\subsection{Boosting}
\label{sec3.3}
By Theorem~\ref{theorem1}, we know that with probability at least $(1-\mu)(1-\gamma)$ there exists an $(\epsilon,\delta)$-approximation in the resulting tree. However, when outlier ratio is high, say $\gamma=0.5$, the success probability $(1-\gamma)(1-\mu)$ will become small. To further improve the performance of our algorithm, we introduce the following two boosting methods.
\begin{enumerate}
\item \textbf{Constructing a forest.} Instead of building a single tree, we randomly initialize several root nodes and grow each root node to be a tree. Suppose the number of root nodes is $\kappa$. The probability that there exists an $(\epsilon,\delta)$-approximation in the forest is at least $1-(1-(1-\gamma)(1-\mu))^{\kappa}$ which is much larger than  $(1-\gamma)(1-\mu)$.
\item \textbf{Sequentialization.} First, initialize one root node and build a tree. Then select the best node in the tree and set it to be the root node for the next tree. After iteratively performing the procedure for several rounds, we can obtain a much more robust solution.
\end{enumerate}
\section{Experiments}
\label{sec-exp}

From our analysis in Section~\ref{sec-quality}, we know that  Algorithm~\ref{alg1} results in a tree $\mathbb{H}$ where each node $v$ has a candidate $c_v$ for the desired $(\epsilon, \delta)$-approximation for the problem of MEB with outliers. For each candidate, we identify the nearest $(1-(1+\delta)\gamma)n$ points to $c_v$ as the inliers. To determine the final solution, we select the candidate that has the smallest variance of the inliers.
\subsection{Datasets and Methods to Be Compared}
In our experiment, we test the algorithms on two random datasets and two benchmark image datasets.
In terms of the random datasets, we  generate the data points based on normal and uniform distributions under the assumption that the inliers usually locate in dense regions while the outliers are scattered in the space.
The benchmark image datasets include the popular MNIST~\cite{lecun1998gradient} and Caltech~\cite{fei2007learning}.

To make our experiment more convincing, we compare our algorithm with three well known methods for outlier recognition: angle-based outlier detection (ABOD)~\cite{kriegel2008angle}, one-class SVM (OCSVM)~\cite{scholkopf1999support}, and discriminative reconstructions in an autoencoder (DRAE)~\cite{xia2015learning}. Specifically, ABOD distinguishes the inliers and outliers by assessing the distribution of the angles determined by each 3-tuple data points; OCSVM models the problem of outlier recognition as a soft-margin one-class SVM; DRAE applies autoencoder to separate the inliers and outliers based on their reconstruction errors.


The performances of the algorithms are measured by the commonly used {\em $F1$ score} $=\frac{2*\text{Precision}*\text{Recall}}{\text{Precision}+\text{Recall}}$,
where precision is the proportion of the correctly identified positives relative to the total number of identified positives, and recall is the proportion of the correctly identified  positives relative to the total number of positives in the dataset.

\subsection{Random Datasets}
\label{sec4.2}
We validate our algorithm on the following two random datasets.

\textbf{A toy example in $2$D.} To better illustrate the intuition of our algorithm, we first run it on a random dataset in 2D. We generate an instance of $10,000$ points with the outlier ratio $\gamma=0.4$. The inliers are generated by a normal distribution; the outliers consist of four groups where the first three are generated by normal distributions and the last is generated by a uniform distribution. The four groups of outliers contains $800$, $1200$, $800$, and $1200$ points, respectively.
See Fig.~\ref{fig4}. The red circle obtained by our algorithm is the boundary to distinguish the inliers and outliers where the resulting $F1$ score is $0.944$. From this case, we can see that our algorithm can efficiently recognize the densest region even if the outlier ratio is high and the outliers also form some dense regions in the space.

\vspace{-2ex}
\begin{figure}[h]
\begin{center}
\centerline{\includegraphics[scale=0.4]{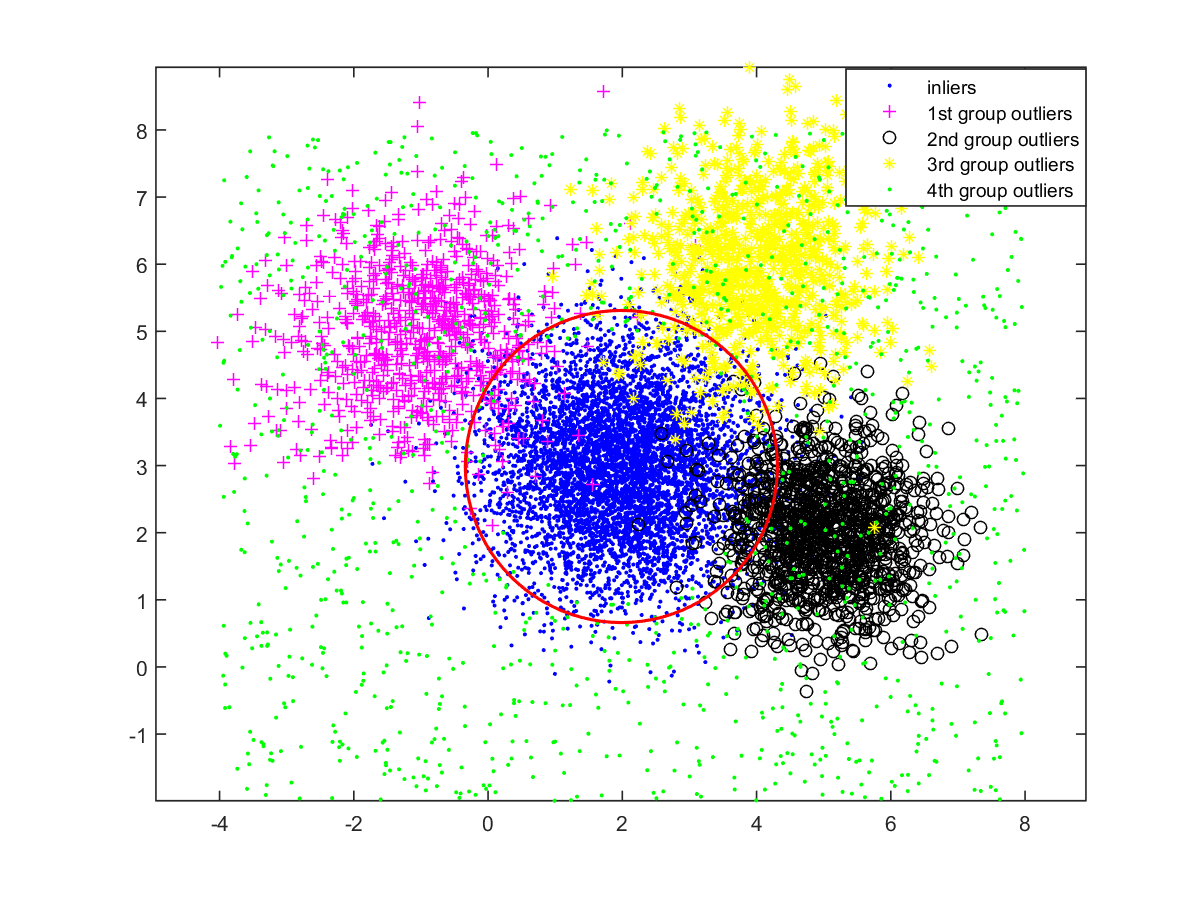}}
\vspace{-0.7cm}
\caption{The illustration of our algorithm on a $2$-dimensional point set.}
\label{fig4}
\end{center}
\end{figure}
\vspace{-0.6cm}

\textbf{High-Dimensional Points.} We further test our algorithm and the other three methods on high-dimensional dataset. Similar to the previous 2D case, we generate $20,000$ points with four groups of outliers in $\mathbb{R}^{100}$; the outlier ratio $\gamma$ varies from $0.1$ to $0.5$.
%
The $F1$ scores are displayed in Table~\ref{tab1}, from which we can see that our algorithm significantly outperforms the other three methods for all the levels of outlier ratio.

\vspace{-1ex}
\begin{table}[htbp]
\centering
\caption{The $F1$ scores for the high-dimensional random dataset.}
\vspace{2ex}
\begin{tabular}{|l|c|c|c|c|c|}
\hline
\diagbox[height=1cm,width=2.2cm]{Methods}{$\gamma$} &$0.1$ &$0.2$ &$0.3$ &$0.4$ &$0.5$\\
\hline
ABOD &$0.907$ &$0.815$ &$0.705$ &$0.586$ &$0.419$\\
\hline
OCSVM &$0.967$ &$0.926$ &$0.880$ &$0.827$ &$0.745$\\
\hline
DRAE &$0.951$ &$0.889$ &$0.809$ &$0.709$ &$0.572$\\
\hline
Ours &$\textbf{0.984}$ &$\textbf{0.965}$ &$\textbf{0.939}$ &$\textbf{0.938}$ &$\textbf{0.898}$\\
\hline
\end{tabular}
\label{tab1}
\end{table}
\vspace{-1ex}
\subsection{Benchmark Image Datasets}
\label{sec4.3}
In this section, we evaluate all the four methods on two benchmark image datasets.

\subsubsection{MNIST Dataset}
\label{sec4.3.1}
MNIST contains $70,000$ handwritten digits ($0$ to $9$) composed of both training and test datasets. For each of the 10 digits, we add the outliers by randomly selecting the images  from the other $9$ digits. For each outlier ratio $\gamma$, we compute the average $F1$ score over all the 10 digits. To map the images to a feature (Euclidean) space, we use two kinds of image features: PCA-grayscale and autoencoder feature.

\begin{table*}[htbp]
\centering
\caption{The $F1$ scores of the four methods on MNIST by using PCA-grayscale; the three columns for each $\gamma$ correspond to PCA-$0.95$, PCA-$0.5$, and PCA-$0.1$, respectively.}
\vspace{2ex}
\small
\begin{tabular}{|l|c|c|c|c|c|c|}
\hline
\diagbox[height=0.8cm,width=2cm]{Methods}{$\gamma$} &$0.1$ &$0.2$ &$0.3$ &$0.4$ &$0.5$\\
\hline
ABOD  &$0.898,0.895,0.892$ &$0.775,0.774,0.771$ &$0.648,0.617,0.642$ &$0.500,0.470,0.496$ &$0.346,0.329,0.364$ \\
\hline
OCSVM &$0.937,\textbf{0.941},0.934$ &$0.874,0.883,0.867$ &$0.804,0.817,0.798$ &$0.725,0.740,0.713$ &$\textbf{0.648},0.639,0.605$ \\
\hline
DRAE &$0.913,0.908,0.911$ &$0.822,0.818,0.816$ &$0.726,0.722,0.711$ &$0.620,0.617,0.602$ &$0.531,0.501,0.488$ \\
\hline
Ours &$\textbf{0.939},\textbf{0.941},\textbf{0.936}$ &$\textbf{0.881},\textbf{0.891},\textbf{0.880}$ &$\textbf{0.822},\textbf{0.853},\textbf{0.823}$ &$\textbf{0.760},\textbf{0.778},\textbf{0.773}$ &$0.633,\textbf{0.658},\textbf{0.651}$ \\
\hline
\end{tabular}
\label{tab3}
\end{table*}
\normalsize

\vspace{-0.3cm}
\begin{enumerate}[(1)]
\item \textbf{PCA-grayscale Feature.} Each image in MNIST has a $28\times 28$ grayscale which is represented by a $784$-dimensional vector. Note that the images of MNIST have massive redundancy. For example, the digits often locate in the middle of the images and all the background pixels have the value of $0$. Therefore,
%
%
%
we apply principle component analysis (PCA) to reduce the redundancy by trying multiple projection matrices which preserve $95\%$, $50\%$, and $10\%$ energy of the original grayscale features. These three features are denoted as PCA-$0.95$, PCA-$0.5$ and PCA-$0.1$, respectively. The results are shown in Table~\ref{tab3}. We notice that our $F1$ scores always achieve the highest by PCA-$0.5$; this is due to the fact that PCA-$0.5$ can significantly reduce the redundancy as well as preserve the most useful information (comparing with PCA-$0.95$ and PCA-$0.1$).

\item \textbf{Autoencoder Feature.} Autoencoder~\cite{rumelhart1988learning} is often adopted to extract the features of grayscale images. The autoencoder model trained in our experiment has seven symmetrical hidden layers ($1000$-$500$-$250$-$60$-$250$-$500$-$1000$), and the input layer is a $784$-dimensional grayscale. We use the middle hidden layer as image feature. The results are shown in Table~\ref{tab4} and our method achieves the best for most of the cases.
\begin{table}[htbp]
\centering
\caption{The $F1$ scores of the four methods on MNIST by using autoencoder feature.}
\vspace{2ex}
\begin{tabular}{|l|c|c|c|c|c|c|}
\hline
\diagbox[height=1cm,width=2.2cm]{Methods}{$\gamma$} &$0.1$ &$0.2$ &$0.3$ &$0.4$ &$0.5$\\
\hline
ABOD  &$0.894$ &$0.778$ &$0.637$ &$0.479$ &$0.313$ \\
\hline
OCSVM &$0.906$ &$0.807$ &$0.706$ &$0.598$ &$0.496$ \\
\hline
DRAE &$\textbf{0.933}$ &$0.883$ &$0.819$ &$0.737$ &$0.625$ \\
\hline
Ours &$0.932$ &$\textbf{0.885}$ &$\textbf{0.831}$ &$\textbf{0.770}$ &$\textbf{0.694}$ \\
\hline
\end{tabular}
\label{tab4}
\end{table}
\end{enumerate}
\vspace{-0.6cm}
\subsubsection{Caltech Dataset}
\label{sec3.3.2}
The Caltech-$256$ dataset \footnote{\url{http://www.vision.caltech.edu/Image_Datasets/Caltech256/}} includes $256$ image sets.  We choose $11$ concepts as the inliers in our experiment, which are \textit{airplane, binocular, bonsai, cup, face, ketch, laptop, motorbike, sneaker, t-shirt}, and \textit{watch}. We apply \textbf{VGG net}~\cite{simonyan2014very} to extract the image features, which is the $4096$-dimensional output of the second fully-connected layer. The results are shown in Table~\ref{tab5}.
\vspace{-0.5cm}
\begin{table}[h]
\centering
\caption{The $F1$ scores of the four methods on Caltech-$256$ by using VGG net feature.}
\vspace{2ex}
\begin{tabular}{|l|c|c|c|c|c|c|}
\hline
\diagbox[height=1cm,width=2.2cm]{Methods}{$\gamma$} &$0.1$ &$0.2$ &$0.3$ &$0.4$ &$0.5$\\
\hline
ABOD &$0.945$ &$0.838$ &$0.707$ &$0.499$ &$0.233$ \\
\hline
OCSVM &$0.930$ &$0.885$ &$0.839$ &$0.783$ &$0.739$ \\
\hline
DRAE &$0.955$ &$0.937$ &$0.930$ &$\textbf{0.927}$ &$\textbf{0.912}$ \\
\hline
Ours &$\textbf{0.964}$ &$\textbf{0.948}$ &$\textbf{0.932}$ &$0.924$ &$0.906$ \\
\hline
\end{tabular}
\label{tab5}
\end{table}
\vspace{-0.2cm}

Unlike the random data, the distribution of real data in the space is much more complicated. To alleviate this problem, we try to capture the separate parts of the original VGG net feature. Similar to Section~\ref{sec4.3.1}, we apply PCA to reduce the redundancy of VGG net feature and preserve its key parts. Three matrices are obtained to preserve $95\%$, $50\%$, and $10\%$ energy respectively.
The results are shown in Table~\ref{tab9}. We can see that our method achieves the best for all the cases, especially when using PCA-$0.5$ (marked by underlines). More importantly,  PCA-$0.5$ considerably improves the results by using the original VGG net feature (see Table~\ref{tab5}), and the dimensionality is only $50$ which results in a significant reduction on the complexities.

\begin{table*}[htbp]
\centering
\caption{The $F1$ scores of the four methods on Caltech-$256$ by using PCA-VGG feature; the three columns for each $\gamma$ correspond to PCA-$0.95$, PCA-$0.5$, and PCA-$0.1$, respectively.}
\vspace{2ex}
\small
\begin{tabular}{|l|c|c|c|c|c|c|}
\hline
\diagbox[height=0.8cm,width=2cm]{Methods}{$\gamma$} &$0.1$ &$0.2$ &$0.3$ &$0.4$ &$0.5$\\
\hline
ABOD  &$0.944,0.942,0.941$ &$0.837,0.832,0.869$ &$0.707,0.708,0.715$ &$0.497,0.489,0.525$ &$0.223,0.199,0.288$ \\
\hline
OCSVM &$0.932,0.914,0.921$ &$0.884,0.894,0.867$ &$0.837,0.869,0.827$ &$0.782,0.830,0.771$ &$0.717,0.790,0.699$ \\
\hline
DRAE &$0.955,0.947,0.928$ &$0.918,0.924,0.878$ &$0.873,0.914,0.835$ &$0.873,0.902,0.773$ &$0.869,0.887,0.692$ \\
\hline
Ours &$\textbf{0.966},\underline{\textbf{0.986}},\textbf{0.949}$ &$\textbf{0.950},\underline{\textbf{0.984}},\textbf{0.923}$ &$\textbf{0.934},\underline{\textbf{0.978}},\textbf{0.897}$ &$\textbf{0.916},\underline{\textbf{0.973}},\textbf{0.871}$ &$\textbf{0.899},\underline{\textbf{0.958}},\textbf{0.844}$ \\
\hline
\end{tabular}
\label{tab9}
\end{table*}
\normalsize
\subsection{Comparisons of Time Complexities}
From Section~\ref{sec4.2} and Section~\ref{sec4.3} we know that our method achieves the robust and competitive performances in terms of accuracy.
In this section, we compare the time complexities of all the four algorithms.

\textbf{ABOD} has the time complexity $O(n^{3}d)$. In the experiment, we use its speed-up edition FastABOD which has the reduced time complexity $O((n^{2}+nk^{2})d)$ where $k$ is some specified parameter.

\textbf{OCSVM} is formulated as a quadratic programming with the time complexity $O(n^{3})$.

\textbf{DRAE} alternatively executes the following two steps: discriminative labeling and reconstruction learning.
 Suppose it runs in $N_{1}$ rounds; actually the two inner steps are also iterative procedures which both run $N_2$ iterations. Thus, the total time complexity of DRAE is $O(N_{1}N_{2}hdn)$, where $h$ is the number of the hidden layer nodes that can be generally expressed as $d/m$ ($m$ is a constant); then the total time complexity becomes $O(\tilde{C}nd^{2})$ where $\tilde{C}$ is a large constant depending on $N_1$, $N_2$ and $m$.

When the number of points $n$ is large, FastABOD, OCSVM, and DRAE will be very time-consuming. On the contrary, our algorithm takes only linear running time (see Section~\ref{sec-complexity}) and usually runs much faster in practice. For example, our algorithm often takes less than $1/2$ of the time consumed by the other three methods in our experiment.

\section{Extension for Multi-class Inliers}
\label{sec-extension}

All the three compared methods in Section~\ref{sec-exp} can only handle one-class inliers. However, in many real scenarios the data could contain multiple classes of inliers. For example, a given image dataset may contain the images of ``dog" and ``cat", as well as a certain fraction of outliers. So it is necessary to recognize multiple dense regions in the feature space. Fortunately, our proposed algorithm for MEB with outliers can be naturally extended for multi-class inliers. Instead of building one ball, we can perform the following {\em greedy peeling strategy} to extract multiple balls: first we can take a small random sample from the input to roughly estimate the fractions for the classes; then we iteratively run the algorithm for MEB with outliers and remove the covered points each time, until the desired number of balls are obtained. Roughly speaking, we reduce the problem of multi-class inliers to a series of the problems of one-class inliers. The extended algorithm for multi-class inliers is evaluated on two datasets, a random dataset in $\mathbb{R}^{100}$ and Caltech-$256$.

\textbf{Random dataset.} We generate three classes of inliers following different normal distributions and the outliers following uniform distribution in $\mathbb{R}^{100}$. For each outlier ratio $\gamma$, we report the three $F1$ scores (with respect to the three classes of inliers) and their average in Table~\ref{tab7} (a).

\textbf{Caltech-256.} We randomly select three image sets from Caltech-$256$ as the three classes of inliers, and an extra set of mixed images from the remaining image sets as the outliers.
Moreover, we point out that recognizing multi-class inliers from real image sets is much more challenging than single class; we believe that it is due to the following two reasons: (1) the multiple classes of inliers could mutually overlap in the feature space and (2) the outlier ratio with respect to each class usually is large (for example, the outlier ratio for class 1 should also take into account of the fractions of the remaining class 2 and 3, if there are 3 classes in total). We use PCA-VGG-$0.5$ feature in our experiment and the performance is very robust (see Table~\ref{tab7} (b)).

\vspace{-0.3cm}
\newcommand{\tabincell}[2]{\begin{tabular}{@{}#1@{}}#2\end{tabular}}
\begin{table}[htbp]
\centering
\caption{The $F1$ scores of our extended algorithm for multi-class inliers.}
\vspace{2ex}
\begin{tabular}{|c|c|c|c|c|}
\hline
$\gamma$ &$0.1$ &$0.2$ &$0.3$ &$0.4$\\
\hline
$F1$ &\tabincell{c}{$0.970$\\ $0.995$\\$0.994$} &\tabincell{c}{$0.985$\\ $0.995$\\$0.994$} &\tabincell{c}{$0.947$\\ $0.995$\\$0.943$} &\tabincell{c}{$0.995$\\ $0.995$\\$0.963$}\\
\hline
AVG &$0.986$ &$0.991$ &$0.962$ &$0.984$\\
\hline
\end{tabular}\\
\vspace{1ex}
\centering{(a) Random dataset}\\
\vspace{1ex}
\begin{tabular}{|c|c|c|c|c|}
\hline
$\gamma$ &$0.1$ &$0.2$ &$0.3$ &$0.4$\\
\hline
$F1$ &\tabincell{c}{$0.995$\\ $0.993$\\$0.960$} &\tabincell{c}{$0.995$\\ $0.953$\\$0.951$} &\tabincell{c}{$0.995$\\ $0.913$\\$0.968$} &\tabincell{c}{$0.994$\\ $0.928$\\$0.870$}\\
\hline
AVG &$0.983$ &$0.966$ &$0.959$ &$0.931$\\
\hline
\end{tabular}\\
\vspace{1ex}
\centering{(b) Caltech-$256$}
\label{tab7}
\end{table}
\vspace{-0.3cm}
\section{Conclusion}
In this paper, we present a new approach for outlier recognition in high dimension. Most existing methods have high time and space complexities or cannot achieve a quality guaranteed solution. On the contrary, we show that our algorithm yields a nearly optimal solution with the time and space complexities linear on the input size and dimensionality. More importantly, our algorithm can be extended to efficiently solve the instances with multi-class inliers. Furthermore, our experimental results suggest that our approach outperforms several popular existing methods in terms of accuracy.
\bibliographystyle{abbrv}

\bibliography{example_paper}


\end{document}